\newtheorem{theorem}{Theorem}
\newtheorem{definition}{Definition}
\newtheorem{example}{Example}
\newtheorem{lemma}{Lemma}
\newtheorem{proposition}{Proposition}
\newtheorem{remark}{Remark}
\newenvironment{proof}[1][Proof]{\emph{#1.} }{\  \hfill $\square $ \vspace{5 pt}}
\tikzset{myptr/.style={decoration={markings,mark=at position 1 with %
       {\arrow[scale=2,>=stealth]{>}}},postaction={decorate}}}
\DeclareFontFamily{T1}{calligra}{}
\DeclareFontShape{T1}{calligra}{m}{n}{<->s*[1.44]callig15}{}
\DeclareMathAlphabet\mathcalligra   {T1}{calligra} {m} {n}
\newcommand{\pablo}[1]{  \ifthenelse{\boolean{showcomments}}
{\textcolor{green!50!black}{(T: #1)}}{}}
\newcommand{\marcelo}[1]{\ifthenelse{\boolean{showcomments}}
{\textcolor{red}{(M: #1)}}{}}
\newcommand{\agustin}[1]{  \ifthenelse{\boolean{showcomments}}
{\textcolor{blue!50!black}{(T: #1)}}{}}
\newcommand{\M}{\mathcal{O}}
\begin{document}

\title{\vspace{-2cm}On voting rules satisfying false-name-proofness and participation%
\thanks{%
We are grateful to Ulle Endriss, Jordi Mass\'o, William Thomson, Fernando Tohm\'e, Philippe Solal, and participants of SING20 in Maastricht and COMSOC 2025 in Vienna, for comments and suggestions that led to improvements of the paper.

Agust\'in Bonifacio acknowledges the financial support from UNSL through grants 031620 and 031323,
from Consejo Nacional de Investigaciones Científicas y Técnicas (CONICET) through grant PIP 112-200801-00655, and from Agencia Nacional de Promoción Científica y Tecnológica through grant PICT 2017-2355.
Federico Fioravanti acknowledges the financial support by the French National Research Agency within the project ANR-24-EXMA-0001 PEPR MathsVivES CONDORCET. Emails: \href{mailto:agustin.german.bonifacio@univ-st-etienne.fr}{agustin.german.bonifacio@univ-st-etienne.fr} and \href{mailto:federico.fioravanti@univ-st-etienne.fr}{federico.fioravanti@univ-st-etienne.fr} (corresponding author). 
}}


\author[1,2]{Agustín G. Bonifacio  
\orcidlink{0000-0003-2239-8673}}
\author[1]{Federico Fioravanti \orcidlink{0000-0002-7154-1192}}
\affil[1]{Universit\'e Jean Monnet Saint-\'Etienne, CNRS, Universit\'e Lyon 2, emlyon business school, GATE, 42023, Saint-\'Etienne, France.}
\affil[2]{Instituto de Matem\'{a}tica Aplicada San Luis, 
Universidad Nacional de San Luis and CONICET, San Luis, Argentina.}

\date{
}

\maketitle
\vspace{-1cm}
\begin{abstract}

We consider voting rules in settings where voters' identities are difficult to verify.
Voters can manipulate the process by casting multiple votes under different identities or abstaining from voting. 
Immunities to such manipulations are called \emph{false-name-proofness} and \emph{participation}, respectively. For the universal domain of (strict) preferences, these properties together imply \emph{anonymity} and are incompatible with \emph{neutrality}. For the domain of preferences defined over all subsets of a given set of objects, both \emph{false-name-proofness} and \emph{participation} cannot be met by rules that are also \emph{onto},  \emph{object neutral}, and  \emph{tops-only}.
However, when preferences over subsets of objects are restricted to be separable, all these properties can be satisfied. 
Furthermore, the domain of separable preferences is maximal for these properties.

\bigskip

\noindent \emph{JEL classification:} D71

\bigskip

\noindent \emph{Keywords:} false-name-proofness; participation; voting; tops-only; anonymity; neutrality.  

\end{abstract}

\section{Introduction}

Societies make decisions by means of voting rules, mapping profiles of voters' preferences into social alternatives. 
In highly anonymous settings, such as the Internet, there are various ways a voter can manipulate the voting rule. 
When participants' identities cannot be easily verified, or when the number of participants is unknown, opportunities for manipulation arise. One such manipulation involves a voter using multiple identities to cast several votes. We say that a rule immune to voters casting duplicate votes is ``false-name-proof''. 
More generally, a voting rule is ``strongly false-name-proof'' if a voter does not benefit from submitting multiple (and possibly different) votes.\footnote{Our strong false-name-proofness property is typically called false-name-proofness in the literature \citep[see, for example,][]{yokoo2004effect,conitzer2008anonymity,bu2013unfolding}.} 
In this paper, we focus on false-name-proof manipulation, which we view as the cognitively simplest form of strategic behavior. 
For instance, one may imagine a voter sitting at a computer repeatedly refreshing a webpage and selecting the same option each time, without engaging in any substantive reasoning about alternative strategies that might benefit her.
A voter can also benefit by abstaining from voting, leading to what is known in the literature as the no-show paradox \citep{fishburn1983paradoxes,MOULIN198853}.
We say that a rule that does not allow such behavior induces ``participation''. 
Since defining these properties requires a changing set of active voters, 
 we consider societies with a variable set of voters.

We are interested in studying voting rules that satisfy false-name-proofness and induce  participation 
in two different settings. In the first, alternatives do not have any specific structure. In the second,  alternatives consist of subsets of a given set of objects (candidates, binary issues, or alike).  

When social alternatives are unstructured and all preferences over those alternatives are admissible, i.e., when we consider the universal domain of preferences,  results on voting rules satisfying some form of false-name-proofness are rather negative. \citet{bu2013unfolding} shows that strong false-name-proofness implies both ``strategy-proofness'' (no voter ever gains by untruthful voting) and ``anonymity'' (exchanging voters' identities does not affect the choice made by the rule).  
As is well-known from the result of \citet{gibbard1973manipulation} and \citet{satterthwaite1975strategy}, for three or more alternatives, there are no non-constant strategy-proof and anonymous rules defined in the universal domain.  
Therefore, no non-constant strongly false-name-proof rule defined in the universal domain exists either. 
However, the weakening of strong false-name-proofness may allow for some possibility results.\footnote{The property of strategy-proofness is central to the literature on mechanism design and has been extensively studied \citep[see][for a comprehensive survey]{barbera2011strategyproof}. We deliberately depart from this line of inquiry and instead focus on rules that satisfy false-name-proofness and participation.}

Our first goal is to analyze the existence of voting rules that satisfy false-name-proofness and participation on the universal domain of preferences. In most voting settings, it is common to also assume that no alternative  deserves special consideration. The requirement of ``neutrality'' formalizes this by demanding that the exchanging of alternatives' names does not affect the choice made by the rule.
Although in most of the literature concerning false-name-proofness anonymity is implicitly assumed, we demonstrate that voting rules satisfying false-name-proofness and participation are anonymous (Proposition~\ref{partfnpimplyanon}).
As a consequence, we also show that there are no neutral rules that  satisfy our two requirements of immunity to manipulation (Proposition~\ref{nonanonandneutral}).

When the set of  alternatives consists of all the subsets of a given set of objects, an important restricted domain of preferences is that of ``separable'' preferences: adding an object to a set leads to a better set if and only if the object is ``good'' (as a singleton set, the object is better than the empty set). 
On that restricted domain, \cite{fioravanti2024false} characterize all voting rules that are false-name-proof, strategy-proof, and ``onto''  (every subset of objects is a possible outcome) as the subclass of voting by quota rules \citep{barbera1991voting} where, to be chosen,  each object needs either at least one vote or a unanimous vote.

Our second goal is to analyze what happens when separability is relaxed, i.e. when all preferences over subsets of objects are admissible.\footnote{A typical example of when this domain can be deemed relevant is inspired by \citet{barbera1991voting}. 
Suppose you are on a university professor hiring committee. You might think that Borda and Condorcet are outstanding professors, and would love to have any of them employed, but believe that the department will be chaos with the two of them in it (probably because of some dissidence on how they like to vote).} 
When alternatives are sets, assuming separable preferences rules out complementarities and substitution effects across objects, a restriction not implied by the voting problem itself. 
The unrestricted domain therefore provides a natural benchmark for evaluating the robustness of voting rules.
Besides false-name-proofness and participation, we impose three other desirable properties. First is ontoness. As we previously said, it means that no subset of objects should be discarded from consideration \emph{a priori}. Second, the structure of this restricted domain allows us to define the weaker neutrality axiom of ``object neutrality'', by which exchanging  objects' names does not affect the choice made by the rule. Third, as voters may not be willing to submit full preferences (this seems particularly important, for example, in online voting settings), we also require the informational simplicity property of ``tops-onliness'', by which only the top choices of the voters are relevant to the rule. We demonstrate that no rule satisfies false-name-proofness, participation, as well as  these three additional desiderata (Theorem \ref{mainresult}). 
Even though it might seem to be over-demanding to require voting rules to satisfy so many properties, the impossibility result is far from being straightforward since (i) we show that the five axioms are independent on the domain of all preferences over subsets of objects, and (ii) the rules characterized by \citet{fioravanti2024false} satisfy all these axioms on the domain of separable preferences.\footnote{\citet{fioravanti2024false} characterize the voting by quota 1 and voting by unanimous quota as the unique object-neutral rules which satisfy strategy-proofness, false-name-proofness, and ontoness, in the domain of separable preferences. 
It is not difficult to see that these rules also satisfy all the aforementioned axioms.}

Finally, we ask whether and to what extent the domain of separable preferences can be enlarged while maintaining the compatibility of all five properties. It turns out that  such a domain is maximal for those properties: adding any non-separable preference to it entails losing at least one of the properties involved (Theorem \ref{theo maximal}).

The property of (strong) false-name-proofness was introduced by \cite{yokoo2004effect}, for the problem of assigning objects with transfers to agents with quasi-linear preferences. 
In voting environments,  when  preferences are single-peaked \citep{black1948rationale}, \citet{todo2011false} characterize the class of all strongly false-name-proof, anonymous, and ``efficient'' (no voter can be made better off without  some voter being made worse off) voting rules. 
\cite{todo2011false} and \cite{todo2020false} extend the analysis to the case where the preferences respect a tree structure. 
Moreover, \citet{Nehama2022} works with a combinatorial structure, generalizing many previous results.
\cite{conitzer2008anonymity} characterizes the class of  anonymous and neutral probabilistic voting rules over a finite set of alternatives that satisfy strong false-name-proofness and participation. The members of this class are parameterized by a probability $p\in[0,1]$, and they are defined as follows. 
With probability $p$, an alternative is chosen uniformly at random.  With probability $1-p$, a pair of alternatives is chosen uniformly at random.  
If all voters  prefer one alternative to the other in the pair, the preferred alternative is chosen; otherwise, a fair coin is used to decide between the two. 
Although \citeauthor{conitzer2008anonymity}'s  \citeyearpar{conitzer2008anonymity} result implies the non-existence of neutral and deterministic voting rules that satisfy strong false-name-proofness and participation, our Proposition~\ref{nonanonandneutral} is not a direct corollary of his, as we use false-name-proofness, a weaker axiom.

Strong false-name-proofness and participation have also been studied in environments where voting is costly.
\citet{wagman2008optimal} characterize, in the case of two alternatives, all voting rules that are most responsive to votes and satisfy participation, together with either strong false-name-proofness or group strong false-name-proofness.
They also establish analogous results for three alternatives and propose computational methods for settings with four or more alternatives.
Such models are particularly relevant in blockchain governance, where voting protocols are designed to deter duplicate identities by making participation costly, while simultaneously encouraging participation through explicit rewards for voting \citep{scaroundtheblockGrossi2022,blockchaingovernanceKiayias2023}.

The plan of the paper is as follows.
Section \ref{model} presents the basic notions and axioms that we use, while we present the results in Section \ref{results}.
Finally, Section \ref{conclusion} contains some concluding remarks.

\section{Model}\label{model}

Let $\mathcal{N}$ be the family of all finite and non-empty subsets 
of the set of positive integers $\mathbb{Z}_+$. 
An element $N\in\mathcal{N}$ is interpreted as a society. 
We denote the cardinality of $N$ by $n$ and refer to an element $i\in N$ as a \emph{voter}. 
Each set of voters $N\in \mathcal{N}$ has to collectively choose 
an \emph{alternative} from a set $\mathcal{A}$. Let $\mathscr{U}_\mathcal{A}$ denote the set of all strict linear orders over $\mathcal{A}$. Each voter $i$ is endowed with a \emph{preference} $P_i \in \mathscr{U}_\mathcal{A}$, where $A\ P_i\ B$ means that for voter $i$, alternative $A$ is preferred to alternative $B$. 
We denote the weak counterpart of $P_i$ by $R_i$. 
Thus, $A\mathrel{R_i} B$ implies that either $A\ P_i\ B$ or $A=B$.
When a preference order is not attached to a particular voter, we write it as $P_0$. For each $N \in \mathcal{N}$, a \emph{profile}  is an ordered list  of preferences $P_N=(P_i)_{i \in N} \in \mathscr{U}_\mathcal{A}^N$.  
Given a preference $P_i\in \mathscr{U}_\mathcal{A}$,  denote by  $t(P_i)\in \mathcal{A}$  the \emph{top} alternative for voter $i$ and denote by  $b(P_i)\in \mathcal{A}$ to the \emph{bottom} alternative for voter $i$.

We call $\mathscr{U}_\mathcal{A}$ the \emph{universal domain} of preferences. 
Besides studying this domain, we will be interested in the domain arising from considering as alternatives all the subsets of a given set of \emph{objects} $\mathcal{O}=\{1,\ldots, O\}$ with $O\geq 2$, i.e., the case where $\mathcal{A}=2^\mathcal{O}$. 
Abusing notation, we call $\mathscr{U}_\mathcal{O}$ the \emph{domain (of preferences) over subsets of objects}. 
Notice that, after some renaming of the alternatives involved, any domain over subsets of objects can be considered a universal domain but, in general, there are universal domains that cannot be considered as domains over subsets of objects.\footnote{For example, if $|\mathcal{A}|=3$ then there is no set of objects $\mathcal{O}$ such that $\mathcal{A}=2^\mathcal{O}$.} 

Given a domain $\mathscr{D} \subseteq \mathscr{U}_{\mathcal{A}}$, let $\mathscr{D}^{\mathcal{N}}=\bigcup_{N \in \mathcal{N}} \mathscr{D}^N$.  A \emph{voting rule} on $\mathscr{D}$ is a mapping $f:\mathscr{D}^{\mathcal{N}} \longrightarrow \mathcal{A}$ that assigns, for each $N \in \mathcal{N}$ and each $P_N \in \mathscr{D}^N$, an element $f(P_N) \in \mathcal{A}$. Next, we define desirable properties of voting rules. 
To do this, fix a domain $\mathscr{D}$ and a rule $f:\mathscr{D}^{\mathcal{N}} \longrightarrow \mathcal{A}$.

The first property states that all alternatives could be selected in all societies.

\vspace{5 pt}

\noindent \textbf{Ontoness:} For each $N \in \mathcal{N}$ and each $A \in \mathcal{A}$, there is a profile $P_N\in \mathscr{D}^N$ such that $f(P_N)=A$.

\vspace{5 pt}

The next axiom asserts that all essential information for the voting rule is found in the top alternatives of the voters. 
Thus, the rule requires minimal information from the voters.

\vspace{5 pt}

\noindent \textbf{Tops-onliness:}
For each $N \in \mathcal{N}$ and each pair of profiles $P_N,P'_N\in \mathscr{D}^{N}$ such that $t(P_i)=t(P'_i)$ for all $i\in N$, it is the case that $f(P_N)=f(P'_N)$.

\vspace{5 pt}

The following three properties are particularly relevant in contexts such as online voting, where a social planner cannot easily verify voters' identities or determine the total number of participants. 
The first property asserts that no voter should ever gain by casting  repeated votes.

\vspace{5 pt}

\noindent \textbf{False-name-proofness:}
For each $N,N' \in \mathcal{N}$ with $N \cap N'=\emptyset$, each $i \in N$, each $P_{N} \in \mathscr{D}^{N}$, and each $P_{N'} \in \mathscr{D}^{N'}$ such that $P_j=P_i$ for each $j \in N'$, we have $f(P_N) \, R_i \, f(P_{N\cup N'}).$

\vspace{5 pt}

\cite{conitzer2008anonymity}'s related condition imposes stronger restrictions on the voting rule by not requiring that the additional preferences submitted by voter $i\in N$ coincide with voter $i$'s original preference $P_i$. 

\vspace{5 pt}

\noindent \textbf{Strong false-name-proofness:}
For each $N,N' \in \mathcal{N}$ with $N \cap N'=\emptyset$, each $i \in N$, each $P_{N} \in \mathscr{D}^{N}$, and each $P_{N'} \in \mathscr{D}^{N'}$, we have $f(P_N)\,R_{i}\,f(P_{N\cup N'}).$

\vspace{5 pt}

The next axiom states that voters are never harmed by voting. 

\vspace{5 pt}

\noindent \textbf{Participation:}
For each $N \in \mathcal{N}$ with $|N|\geq 2$, each $i \in N$, and each $P_{N} \in \mathscr{D}^{N}$, we have $f(P_N) \, R_i \, f(P_{N\setminus \{i\}}).$

\vspace{5 pt}

The following property states that no voter should receive a differential treatment.

\vspace{5 pt}

\noindent \textbf{Anonymity:} For each permutation $\sigma:\mathbb{Z}_+ \longrightarrow \mathbb{Z}_+$, each $N\in \mathcal{N}$, and each $P_N\in \mathscr{D}^N$, $f(\sigma(P_N))=f(P_N)$,  where $\sigma(P_N)=(P_{\sigma(i)})_{i\in N}$.

\vspace{5 pt}

\cite{conitzer2008anonymity} merges the properties of \emph{strong false-name-proofness, participation} and \emph{anonymity} under the name of \emph{anonymity-proofness}.
A  property similar to \emph{anonymity}, but applied to alternatives, is provided next. Given a permutation $\gamma:\mathcal{A}\longrightarrow \mathcal{A}$, and a profile $P_N\in \mathscr{D}^{N}$, let $P_N^\gamma$ be the profile defined by setting, for each $i \in N$ and each pair $A,A' \in \mathcal{A}$, $\gamma(A) \ P^\gamma_i \ \gamma(A')$ if and only if $A P_i   A'$.

\vspace{5 pt}

\noindent \textbf{Neutrality:}
For each permutation $\gamma:\mathcal{A}\longrightarrow \mathcal{A}$ and each $P_N\in \mathscr{D}^{N}$, $\gamma (f(P_N))=f(P_N^\gamma)$.

\vspace{5 pt}

A weaker notion of neutrality can be written  for voting rules defined on the domain of subsets of objects, $\mathscr{U}_\mathcal{O}$. Given a permutation $\mu:\mathcal{O}\longrightarrow \mathcal{O}$, a subset of objects $S\in 2^\mathcal{O}$, and a profile $P_N\in \mathscr{U}_\mathcal{O}^{N}$, let $\mu (S)=\{ \mu(x) : x \in S\}$ and let $P_N^\mu$ be the profile such that, for each $i \in N$ and each pair $S,T \in 2^\mathcal{O}$, $\mu(S)\ P^\mu_i\ \mu(T)$ if and only if $S\ P_i\ T$.

\vspace{5 pt}

\noindent \textbf{Object neutrality:} 
For each permutation $\mu:\mathcal{O}\longrightarrow \mathcal{O}$, each $N \in \mathcal{N}$, and each $P_N\in \mathscr{U}_\mathcal{O}^{N}$, $\mu (f(P_N))=f(P_N^\mu).$

\section{Results}\label{results}

\subsection{Universal domain}\label{subsection universal}

Our first result shows that if the identity of a voter changes while the preference remains the same, the outcome of the rule remains unchanged. This result is instrumental in proving one of our main findings: that any false-name-proof voting rule satisfying participation is anonymous.
Let $\mathscr{D}\subseteq \mathscr{U}_\mathcal{A}$, that is, a generic subset of the universal domain.

\begin{proposition}\label{singleanonymity}
Let $f:\mathscr{D}^{\mathcal{N}} \longrightarrow \mathcal{A}$ be a voting rule that satisfies \emph{false-name-proofness} and \emph{participation}. 
Let $N \in \mathcal{N}$, $i\in N$,  and $P_N \in \mathscr{D}^N$. 
Then, if $i^\star \notin N$ and $P_{i^\star} \in \mathscr{D}$ is such that $P_{i^\star}=P_i$, it is the case that $f( P_{(N\cup\{i^\star\})\setminus\{i\}})=f(P_N).$   
\end{proposition}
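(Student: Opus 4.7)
The plan is to introduce the intermediate ``inflated'' profile $P_{N\cup\{i^\star\}}$ (the profile on $N\cup\{i^\star\}$ where $i^\star$ holds the preference $P_{i^\star}=P_i$) and bound $f$ at this profile from both sides by $f(P_N)$ and by $f(P_{(N\cup\{i^\star\})\setminus\{i\}})$, using \emph{false-name-proofness} and \emph{participation} in tandem. Writing $M=(N\cup\{i^\star\})\setminus\{i\}$ for brevity, the key observation that makes everything click is that $R_i=R_{i^\star}$ because the two voters have identical preferences, so we can freely pass between claims about voter $i$'s welfare and voter $i^\star$'s welfare.

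The first direction proceeds as follows. Apply \emph{false-name-proofness} to voter $i\in N$, taking $N'=\{i^\star\}$ (which is allowed since $P_{i^\star}=P_i$); this yields $f(P_N)\, R_i\, f(P_{N\cup\{i^\star\}})$. Then apply \emph{participation} on the society $N\cup\{i^\star\}$ (which has at least two voters) with respect to voter $i$: this gives $f(P_{N\cup\{i^\star\}})\, R_i\, f(P_M)$. Chaining the two inequalities produces $f(P_N)\, R_i\, f(P_M)$.

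For the reverse direction I would repeat the argument with the roles of $i$ and $i^\star$ swapped. Apply \emph{false-name-proofness} starting from the society $M$ (which contains $i^\star$ but not $i$), with voter $i^\star$ duplicating himself into the extra voter $i$; this yields $f(P_M)\, R_{i^\star}\, f(P_{N\cup\{i^\star\}})$. Then apply \emph{participation} on $N\cup\{i^\star\}$ with respect to voter $i^\star$ to get $f(P_{N\cup\{i^\star\}})\, R_{i^\star}\, f(P_N)$. Using $R_{i^\star}=R_i$, the chain becomes $f(P_M)\, R_i\, f(P_N)$.

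Finally, since $P_i$ is a strict linear order, its weak counterpart $R_i$ is antisymmetric, so the two inequalities $f(P_N)\, R_i\, f(P_M)$ and $f(P_M)\, R_i\, f(P_N)$ force $f(P_N)=f(P_M)=f(P_{(N\cup\{i^\star\})\setminus\{i\}})$. There is no real obstacle here; the only subtlety is remembering to apply each of the two axioms in both ``directions'' (duplicating once through $i$'s perspective and once through $i^\star$'s) and to track that the $|N\cup\{i^\star\}|\ge 2$ hypothesis needed for \emph{participation} is automatic since $i^\star\notin N$.
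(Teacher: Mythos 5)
Your proof is correct and follows essentially the same route as the paper: both arguments pass through the inflated profile $P_{N\cup\{i^\star\}}$ and bound $f$ there from both sides via \emph{false-name-proofness} and \emph{participation}, once from voter $i$'s perspective and once from voter $i^\star$'s, concluding by antisymmetry of $R_i=R_{i^\star}$. The only cosmetic difference is that the paper factors the duplication step through its Lemma~1 (obtaining the equality $f(P_N)=f(P_{N\cup\{i^\star\}})$), whereas you use the one-sided false-name-proofness inequalities directly, which suffices.
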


To show that Proposition~\ref{singleanonymity} holds, we use the following result, which helps us get rid of repeated votes.
\begin{lemma}\label{repeatedvotes}
Let $f:\mathscr{D}^{\mathcal{N}} \longrightarrow \mathcal{A}$ be a voting rule that satisfies \emph{false-name-proofness} and \emph{participation}. 
Let $N \in \mathcal{N}$, $i\in N$, and $P_N \in \mathscr{D}^N$; and let   
$i^\star \notin N$ and $P_{i^\star} \in \mathscr{D}$ be such that $P_{i^\star}=P_i$. Then, 
$f(P_N)=f(P_{N\cup\{i^\star\}})$.  
\end{lemma}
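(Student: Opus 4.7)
The statement is quite short and, I believe, admits a direct two-line proof combining the two axioms in opposite directions on the same pair of profiles. Let me sketch the plan.

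The idea is to compare the two outcomes $A := f(P_N)$ and $B := f(P_{N\cup\{i^\star\}})$ using the perspectives of voter $i$ and the duplicate voter $i^\star$, which by construction share the same preference. First, I would apply \emph{false-name-proofness} from voter $i$'s point of view: taking $N' := \{i^\star\}$ (disjoint from $N$) and noting that $P_{i^\star} = P_i$, the axiom gives $f(P_N)\,R_i\,f(P_{N\cup\{i^\star\}})$, i.e., $A\,R_i\,B$. Second, I would apply \emph{participation} to voter $i^\star$ in the enlarged society $N\cup\{i^\star\}$ (which has at least two elements since $i \in N$): removing $i^\star$ yields $N$, so the axiom gives $f(P_{N\cup\{i^\star\}})\,R_{i^\star}\,f(P_N)$, that is $B\,R_{i^\star}\,A$.

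Since $P_{i^\star} = P_i$, the weak counterparts coincide, so $B\,R_i\,A$ as well. As $P_i$ is a strict linear order on $\mathcal{A}$, its weak counterpart $R_i$ is antisymmetric, and the two inequalities $A\,R_i\,B$ and $B\,R_i\,A$ force $A = B$, which is the desired conclusion $f(P_N) = f(P_{N\cup\{i^\star\}})$.

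There is no real obstacle here: the lemma is essentially a pigeonhole between the two axioms, exploiting the fact that the duplicate voter $i^\star$ has exactly the same preference as $i$ so that their weak preference relations are identical. The only point to double-check is that the hypotheses of each axiom are satisfied (disjointness of $N$ and $\{i^\star\}$ for false-name-proofness, and $|N\cup\{i^\star\}|\geq 2$ for participation), both of which are immediate from $i\in N$ and $i^\star \notin N$.
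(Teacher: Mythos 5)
Your proof is correct and follows exactly the same route as the paper: apply \emph{false-name-proofness} to voter $i$ with the duplicate $i^\star$, apply \emph{participation} to $i^\star$ in the enlarged society, and use $P_{i^\star}=P_i$ together with antisymmetry of $R_i$ to conclude equality. No issues.
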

\begin{proof}
Let $f$, $N$, $P_N$, $i$, $i^\star$, and $P_{i^\star}$ be as in the lemma. By \emph{false-name-proofness}, $f(P_N)\,R_i\, f(P_{N\cup\{i^\star\}})$.
By \emph{participation},  $f(P_{N\cup\{i^\star\}})\,R_{i^\star}\,f(P_N)$.
As $P_{i^\star}=P_i$, we have that $f(P_{N\cup\{i^\star\}})\,R_{i}\,f(P_N)$.
Thus, $f(P_N)\ R_i\, f(P_{N\cup\{i^\star\}})\,R_{i}\,f(P_N)$ and  $f(P_N)=f(P_{N\cup\{i^\star\}})$.
\end{proof}

\bigskip 

\noindent \begin{proof}[Proof of Proposition~\ref{singleanonymity}]
Let $f$, $N$, $P_N$, $i$, $i^\star$, and $P_{i^\star}$ be as in the proposition. By Lemma~\ref{repeatedvotes} and \emph{participation}, $f(P_N)=f(P_{N\cup\{i^\star\}})\,R_i\,f(P_{N\setminus\{i\}\cup\{i^\star\}})$. By Lemma~\ref{repeatedvotes} and \emph{participation} again, $f(P_{N\setminus\{i\}\cup\{i^\star\}})=f(P_{N\cup\{i^\star\}})\,R_{i^\star}\,f(P_N)$.
As $P_i=P_{i^\star}$, $f(P_N)\,R_i\,f(P_{N\setminus\{i\}\cup\{i^\star\}})\,R_if(P_N)$  and, therefore, $f(P_N)=f(P_{N\setminus\{i\}\cup\{i^\star\}})$.
\end{proof}

\citet{bu2013unfolding} and \citet{fioravanti2024false} explore the connection between \emph{false-name-proofness} and \emph{anonymity}. Our next result follows that line and shows that the names of the voters are not important for a rule that satisfies \emph{false-name-proofness} and \emph{participation}.

\begin{proposition}\label{partfnpimplyanon}
A voting rule $f:\mathscr{D}^{\mathcal{N}} \longrightarrow \mathcal{A}$ that satisfies \emph{false-name-proofness} and \emph{participation}, also satisfies \emph{anonymity}.   
\end{proposition}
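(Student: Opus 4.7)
The plan is to reduce anonymity to invariance of $f$ under a single transposition of two voters and then simulate that transposition by three successive applications of Proposition~\ref{singleanonymity}.

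First, I would observe that the formulation of anonymity only constrains permutations $\sigma$ for which $\sigma(P_N)=(P_{\sigma(i)})_{i\in N}$ is well-defined, so $\sigma$ must restrict to a permutation of $N$. Because every permutation of the finite set $N$ factors as a product of transpositions, an easy induction on the number of factors reduces the problem to showing $f(P_N)=f(\tau(P_N))$ for an arbitrary transposition $\tau=(j\ k)$ swapping two distinct voters $j,k\in N$ and fixing everyone else.

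For this transposition, I would fix an auxiliary index $\ell\in\mathbb{Z}_+\setminus N$ (which exists since $N$ is finite) and use $\ell$ as a temporary label to exchange $j$ and $k$ without ever duplicating a label. Specifically, I would apply Proposition~\ref{singleanonymity} three times in succession: (i) replace voter $j$ by $\ell$, transferring the preference $P_j$ to $\ell$, so the voter set becomes $(N\setminus\{j\})\cup\{\ell\}$; (ii) in the resulting profile, replace voter $k$ by $j$, now carrying preference $P_k$, yielding the voter set $(N\setminus\{j,k\})\cup\{\ell,j\}$; (iii) finally, replace $\ell$ by $k$, now carrying preference $P_j$, which returns the voter set to $N$. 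Each step is a legitimate instance of Proposition~\ref{singleanonymity}: the label being introduced is absent from the current voter set thanks to the freshness of $\ell$ in step (i) and the removals performed in the preceding steps. After the three renamings, voter $j$ carries $P_k$, voter $k$ carries $P_j$, and every other voter carries its original preference, so the profile is exactly $\tau(P_N)$, and $f(P_N)=f(\tau(P_N))$ follows by chaining the three equalities.

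I do not anticipate a conceptual obstacle beyond careful bookkeeping; the only subtlety is precisely the need for the auxiliary index $\ell$, which is what allows the two voters to exchange labels through a three-step sequence in which the newly introduced name is always disjoint from the current voter set.
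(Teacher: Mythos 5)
Your simulation of a transposition within $N$ by three applications of Proposition~\ref{singleanonymity} through a fresh label $\ell$ is correct and airtight: at each of the three replacements the incoming name is indeed absent from the current society, and the terminal profile is exactly $\tau(P_N)$. Combined with the decomposition of any permutation of $N$ into transpositions, this cleanly establishes that $f$ is invariant under permutations that map $N$ onto itself. This is also a genuinely different route from the paper, which does not use transpositions at all: it first handles permutations with $\sigma(N)\cap N=\emptyset$ by iterating Proposition~\ref{singleanonymity} voter by voter, and then factors an arbitrary $\sigma$ through an auxiliary society $N''$ disjoint from $N\cup\sigma(N)$.

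The gap is in your opening move, where you declare that the axiom ``only constrains permutations $\sigma$ for which $\sigma(P_N)=(P_{\sigma(i)})_{i\in N}$ is well-defined, so $\sigma$ must restrict to a permutation of $N$.'' That reading is defensible from the notation alone, but it is not the one the paper uses: \emph{anonymity} here quantifies over all permutations of $\mathbb{Z}_+$, and $\sigma(P_N)$ is understood as a profile for the (possibly different) society $\sigma(N)$. The paper's own proof devotes its Case~1 to $\sigma(N)\cap N=\emptyset$, and this cross-society invariance is exactly what is invoked later, in the proof of Theorem~\ref{mainresult}, to pass from $f(P_i)=t(P_i)$ for the singleton society $\{i\}$ to $f(P_\ell')=t(P_\ell')$ for a different singleton society $\{\ell\}$ --- a conclusion your within-$N$ version of anonymity cannot deliver, since the only permutation of a singleton is the identity. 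So as written your argument proves a strictly weaker property than the one the paper needs. The repair is not hard: for $\sigma(N)$ disjoint from $N$ you iterate Proposition~\ref{singleanonymity} directly, and for a general $\sigma$ with partial overlap you either route through a disjoint intermediate society as the paper does, or generalize your fresh-label device; but some such step must be added before the proposition, as used downstream, is actually established.
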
 
\begin{proof}
Let $f$ satisfy \emph{false-name-proofness} and \emph{participation}. 
Consider $N \in \mathcal{N}$, a profile \mbox{$P_N\in \mathscr{D}^N$}, and a permutation $\sigma:\mathbb{Z}_+ \longrightarrow \mathbb{Z}_+$. 
We need to show that $f(\sigma(P_N))=f(P_N)$,  where $\sigma(P_N)=(P_{\sigma(i)})_{i\in N}$.  
There are two cases to consider:
\begin{itemize}
    \item[$\boldsymbol{1}.$] $\boldsymbol{N \cap \sigma(N)=\emptyset}$. 
    By iterating the result of Proposition \ref{singleanonymity}, we obtain that $f(\sigma(P_N))=f(P_N)$.
    \item[$\boldsymbol{2}.$] $\boldsymbol{N \cap \sigma(N) \neq \emptyset}$. 
    Let $N'=\sigma(N)$ and  $N''\in\mathcal{N}$ be such that $N''\cap (N\cup N')=\emptyset$ and $|N''|=|N|$. Then, there are two permutations $\widetilde{\sigma}, \widehat{\sigma}: \mathbb{Z}_+ \longrightarrow \mathbb{Z}_+$
such that  $\widetilde{\sigma}(N)=N''$, $\widehat{\sigma}(N'')=N'$, and $\sigma=\widehat{\sigma} \circ \widetilde{\sigma}$.  
By the previous case, $f(\sigma(P_N))=f(P_{N'})=f(\widehat{\sigma}(P_{N''}))=f(P_{N''})=f(\widetilde{\sigma}(P_N))=f(P_N)$.  
\end{itemize} Therefore, $f$ satisfies \emph{anonymity}.

 \end{proof}

\begin{remark} \em
    Both requirements in Proposition \ref{partfnpimplyanon} are necessary to obtain \emph{anonymity}, i.e., there are non-anonymous voting rules that satisfy either {\em false-name-proofness} or {\em participation}. 
The rule 
that selects voter $1$'s top alternative whenever 1 is present, and otherwise assigns a status-quo alternative, satisfies \emph{participation} and is not \emph{anonymous}.  
 Furthermore, the rule 
 that selects voter 1's bottom alternative whenever 1 is present, and otherwise assigns a status-quo alternative, satisfies \emph{false-name-proofness} and is not \emph{anonymous}.\footnote{A natural setting in which such rules may be relevant is a platform like X (formerly Twitter). 
 Suppose a social planner wishes to follow the views of a particular public figure; if that individual does not maintain an account, however, another user could create an impostor profile and exploit the resulting influence.} 
\end{remark}

It is well known that, for most choices of $|N|$ and $|\mathcal{A}|$, 
\emph{anonymity} and \emph{neutrality} are not compatible on the universal domain.\footnote{In fact, such compatibility exists if and only if $|\mathcal{A}|$ cannot be written as the sum of dividers of $|N|$ different than 1 \citep[see, for example, Exercise 9.9 in][for more details]{moulin1991axioms}.} 
This implies that both requirements cannot be met for voting rules defined in a variable population environment.

\begin{remark}\label{remark anon and neutral}
\em No voting rule $f:\mathscr{U}_\mathcal{A}^{\mathcal{N}} \longrightarrow \mathcal{A}$ satisfies \emph{anonimity} and \emph{neutrality}.  
\end{remark}

Our first impossibility result says that for rules defined in the universal domain, \emph{false-name-proofness} together with \emph{participation} are incompatible with \emph{neutrality}.

\begin{proposition}\label{nonanonandneutral}
 No voting rule $f:\mathscr{U}_\mathcal{A}^{\mathcal{N}} \longrightarrow \mathcal{A}$ satisfies \emph{false-name-proofness, participation}, and \emph{neutrality}.   
\end{proposition}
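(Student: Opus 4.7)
The proposition is a direct consequence of Proposition~\ref{partfnpimplyanon} together with the classical incompatibility between anonymity and neutrality on the universal domain recalled in Remark~\ref{remark anon and neutral}. I would proceed by contradiction: suppose $f$ satisfies false-name-proofness, participation, and neutrality. By Proposition~\ref{partfnpimplyanon}, $f$ is then also anonymous, so the task reduces to exhibiting a society $N$ and a profile $P_N$ on which anonymity and neutrality cannot simultaneously hold.

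For this I would use the standard cyclic construction corresponding to the simplest failure of the Moulin condition recalled in the footnote. Write $k = |\mathcal{A}|$, assumed finite with $k \geq 2$. Enumerate $\mathcal{A} = \{A_0, A_1, \ldots, A_{k-1}\}$ and take $N = \{0, 1, \ldots, k-1\}$. For each $i \in N$, let $P_i$ be the linear order $A_i \succ A_{i+1} \succ \cdots \succ A_{i+k-1}$, with all indices taken modulo $k$. Let $\gamma : \mathcal{A} \to \mathcal{A}$ be the $k$-cycle $\gamma(A_j) = A_{j+1 \bmod k}$, and let $\sigma : N \to N$ be the matching voter cycle $\sigma(i) = i+1 \bmod k$.

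A short unpacking of the definition of $P_i^\gamma$ shows that its top alternative is $\gamma(A_i) = A_{i+1}$ and, more generally, its rank-$(\ell+1)$ alternative is $\gamma(A_{i+\ell}) = A_{i+\ell+1}$, so $P_i^\gamma = P_{i+1 \bmod k} = P_{\sigma(i)}$. Hence $P_N^\gamma = \sigma(P_N)$. Anonymity (inherited from Proposition~\ref{partfnpimplyanon}) gives $f(\sigma(P_N)) = f(P_N)$, while neutrality gives $f(P_N^\gamma) = \gamma(f(P_N))$. Chaining these two equalities yields $\gamma(f(P_N)) = f(P_N)$, so $f(P_N)$ is a fixed point of $\gamma$. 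But the $k$-cycle $\gamma$ has no fixed points on $\mathcal{A}$ since $k \geq 2$, a contradiction.

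The only step that genuinely requires care is the identification $P_N^\gamma = \sigma(P_N)$ for the cyclic profile, which is a direct bookkeeping exercise from the definitions of $P^\gamma$ and $\sigma(P_N)$; everything else is immediate once Proposition~\ref{partfnpimplyanon} is invoked.
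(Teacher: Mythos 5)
Your proof is correct and follows the same route as the paper: invoke Proposition~\ref{partfnpimplyanon} to obtain anonymity, then derive a contradiction with neutrality. The only difference is that the paper simply cites the known incompatibility of anonymity and neutrality on the universal domain (Remark~\ref{remark anon and neutral}), whereas you prove the needed instance explicitly via the cyclic profile with $|N|=|\mathcal{A}|$; your verification that $P_N^\gamma=\sigma(P_N)$ is sound, modulo the cosmetic adjustments of relabeling voters as $\{1,\dots,k\}\subseteq\mathbb{Z}_+$ and extending $\sigma$ to a permutation of $\mathbb{Z}_+$ by the identity.
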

\begin{proof}
    Let $f$ satisfy \emph{false-name-proofness} and \emph{participation}. Then, by Proposition \ref{partfnpimplyanon}, $f$ satisfies \emph{anonymity}. By Remark \ref{remark anon and neutral}, $f$ cannot be \emph{neutral}. 
\end{proof}

\begin{remark} \em
    
    Theorem 1 of \citet{conitzer2008anonymity} shows that only a very restricted class of randomized rules can satisfy these properties, which in particular implies that no deterministic and neutral voting rule meets them. 
    In his framework, however, the weaker notion of false-name-proofness admits a much richer family of admissible rules; for instance, any rule that depends solely on the set of votes in a profile satisfies the axiom. 
    Consequently, Proposition \ref{nonanonandneutral} does not follow directly from \citet{conitzer2008anonymity}.
\end{remark}

\subsection{Domain over subsets of objects}\label{subsection subsets}

Now, we turn our attention to rules defined on the domain over subsets of objects, $\mathscr{U}_\mathcal{O}$. In this environment, it makes sense to relax \emph{neutrality} to \emph{object neutrality}, in order to look for positive results.
In the following example we show the existence of \emph{false-name-proof} rules that also satisfy \emph{participation} and \emph{object neutrality}. Alas, neither \emph{ontoness} nor \emph{tops-onliness} are guaranteed.

\begin{example}\label{ejemplo} \em
First, define rule $f^\mathcal{O}:\mathscr{U}_\mathcal{O}^\mathcal{N} \longrightarrow 2^\mathcal{O}$ as follows. 
For each $N \in \mathcal{N}$ and each $P_N \in \mathscr{U}_{\mathcal{O}}^N$, $f^\mathcal{O}(P_N)=\mathcal{O}$. 
This constant rule always selects the whole set of objects $\mathcal{O}$ and clearly satisfies all properties but \emph{ontoness}. 

Next, given $N \in \mathcal{N}$ and $P_N \in \mathscr{U}_{\mathcal{O}}^N$, let 
$\widetilde{\mathcal{O}}(P_N)=\{i \in N : t(P_i)\neq \mathcal{O}$ and $ \mathcal{O} \ P_i  \ S$ for each $S \in 2^\mathcal{O} \setminus \{t(P_i), \mathcal{O}\}\}$. 
Define rule $\widetilde{f}:\mathscr{U}_\mathcal{O}^\mathcal{N} \longrightarrow 2^\mathcal{O}$ as follows. For each $N \in \mathcal{N}$ and each $P_N \in \mathscr{U}_{\mathcal{O}}^N$,  
$$\widetilde{f}(P_N)=\begin{cases}
    t(P_i) & \text{ if } i \in \widetilde{\mathcal{O}}(P_N) \text{ and }t(P_i)=t(P_j) \text{ for each }j \in  \widetilde{\mathcal{O}}(P_N)\\
    \mathcal{O} & \text{ otherwise}
\end{cases}$$
Rule $\widetilde{f}$ selects the set of all objects, $\mathcal{O}$, unless all voters who consider  $\mathcal{O}$ as their second choice share their top choice, in which case the rule recommends such top choice. 
This rule satisfies all properties but \emph{tops-onliness}. 
To see this, let $\mathcal{O}=\{x,y\}$, $N \in \mathcal{N}$, and consider $P_N, P_N' \in \mathscr{U}_\mathcal{O}^N$ such that $t(P_i)=t(P_i')=\{x\}$ for each $i \in N$, $b(P_{i})=\mathcal{O}$ for each  $i \in N$, and $\widetilde{O}(P_N')=N.$ 
Then, $\widetilde{f}(P_N)=\mathcal{O}\neq \{x\}=\widetilde{f}(P_N')$. \hfill $\Diamond$
\end{example}

Our second impossibility result establishes that, on the domain of preferences over subsets of objects, the five desired properties are mutually incompatible. 
The intuition of the impossibility is that, for any given preference profile, \emph{ontoness}, \emph{tops-onliness}, \emph{false-name-proofness}, and \emph{participation} jointly pin down the outcome of the voting rule. 
Then, \emph{false-name-proofness} and \emph{participation}  help to determine outcomes in reduced societies consisting of one or two voters. 
When analyzing these reduced societies, we obtain contradictions between \emph{object-neutrality} and \emph{anonymity} (the latter implied by Proposition~\ref{partfnpimplyanon}).

\begin{theorem}\label{mainresult} No voting rule $f:\mathscr{U}_\mathcal{O}^{\mathcal{N}} \longrightarrow 2^\M$ satisfies \emph{ontoness, tops-onliness, false-name-proofness, participation}, and \emph{object neutrality}.
\end{theorem}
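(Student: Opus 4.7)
The plan is to translate the problem into a constraint on a single function on subsets of $2^\mathcal{O}$ and then to extract enough structure from participation and object neutrality to contradict ontoness. Assume for contradiction that $f$ satisfies all five axioms. By Proposition~\ref{partfnpimplyanon}, $f$ is anonymous; together with tops-onliness this implies that $f(P_N)$ depends only on the multiset of top alternatives. Using Lemma~\ref{repeatedvotes} (after first replacing, via tops-onliness, each voter's preference by a canonical one with the same top), repeated tops may be collapsed without affecting the outcome. Hence $f$ induces a well-defined map $\psi:2^{2^\mathcal{O}}\setminus\{\emptyset\}\longrightarrow 2^\mathcal{O}$, where $\psi(X)$ is the common value of $f$ on any profile whose set of distinct tops is $X$. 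Write $\psi_1(S):=\psi(\{S\})$.

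The step I expect to be the main obstacle is extracting clean structural information about $\psi$ from the participation axiom. The key local observation is: for any nonempty $X$ and any $S\notin X$, choosing a voter with top $S$ whose preference ranks $\psi(X)$ immediately below $S$ forces $\psi(X\cup\{S\})\in\{S,\psi(X)\}$. Iterating this along a path that builds $X$ from a singleton $\{S_0\}$ gives $\psi(X)\in\{\psi_1(S_0)\}\cup(X\setminus\{S_0\})$, and intersecting over $S_0\in X$ produces the dichotomy I need: either $\psi(X)\in X$ and $\psi_1(\psi(X))=\psi(X)$, or $\psi(X)\notin X$ and $\psi_1$ is constant on $X$ with value $\psi(X)$. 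Specializing to $X=\{S,T\}$ with $\psi_1(S)\neq S$ and $\psi_1(T)\neq T$ forces $\psi_1(S)=\psi_1(T)$, so either $\psi_1=\mathrm{id}$, or there exist a nonempty $F\subseteq 2^\mathcal{O}$ and a value $c\notin F$ such that $\psi_1\equiv c$ on $F$ and $\psi_1=\mathrm{id}$ on $F^c$.

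With that structural description in hand, object neutrality closes the argument quickly. Equivariance of $\psi_1$ forces $F$ to be invariant under every permutation of $\mathcal{O}$ acting on $2^\mathcal{O}$ (so $F$ is a union of size classes) and forces $c$ to be a fixed point of every such permutation, so $c\in\{\emptyset,\mathcal{O}\}$. The dichotomy already implies $\psi(X)\in F^c$ for every $X$, and ontoness then demands $F^c=2^\mathcal{O}$, i.e., $F=\emptyset$ and $\psi_1=\mathrm{id}$. The dichotomy now collapses to $\psi(X)\in X$ for every nonempty $X$; applied to $X=\{\{1\},\{2\}\}$, object neutrality under the transposition of $1$ and $2$ fixes $X$ while swapping its two elements, so $\psi(X)$ would have to lie in $X$ and simultaneously be fixed by that transposition, which is impossible --- the required contradiction.
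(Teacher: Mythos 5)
Your proof is correct, and it is organized quite differently from the paper's. The paper first extracts, via ontoness, a configuration where some voter is pivotal ($f(P_N)\neq f(P_{N\setminus\{i\}})$), uses participation with a preference that ranks the previous outcome second to force $f(P_N)=t(P_i)$, peels voters off one at a time to reach a singleton society, and then derives the contradiction from a three-way case analysis on a two-voter profile whose tops have the same cardinality as $t(P_i)$. You instead prove a global structure theorem: after reducing $f$ to a map $\psi$ on sets of tops (which is the same use of tops-onliness, anonymity via Proposition~\ref{partfnpimplyanon}, and Lemma~\ref{repeatedvotes} as in the paper), you iterate the same ``rank the current outcome immediately below the top'' participation trick along a chain building $X$ from each singleton, and intersect to get the dichotomy $\psi(X)\in X$ with $\psi_1(\psi(X))=\psi(X)$, or $\psi_1\equiv\psi(X)$ on $X$. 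Ontoness and equivariance then force $\psi_1=\mathrm{id}$ and $\psi(X)\in X$ everywhere, and the transposition of two objects applied to $X=\{\{1\},\{2\}\}$ kills this. The core technical moves (the participation construction and the neutrality-plus-anonymity swap on a two-element configuration) are the same, but your decomposition buys two things: a reusable description of all participation-compatible $\psi$'s before any neutrality or ontoness is invoked, and a final contradiction anchored at the singletons $\{1\},\{2\}$, which always exist since $O\geq 2$ --- whereas the paper's endgame needs two distinct tops of the same cardinality as the pivotal voter's top, a step that is delicate when that top happens to be $\emptyset$ or $\mathcal{O}$.
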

\begin{proof}
Assume there is a voting rule $f:\mathscr{U}_\mathcal{O}^{\mathcal{N}} \longrightarrow 2^\M$ that satisfies the five axioms. First, we claim that there are $N \in \mathcal{N}$, a profile $P_N \in \mathscr{U}_\mathcal{O}^N$, and a voter $i \in N$, such that 
\begin{equation}\label{eq neq sin i}
 f(P_N)\neq f(P_{N\setminus\{i\}}).   
\end{equation}
If this is not the case, for each $\{j,k\} \in \mathcal{N}$ and each  $(P_j, P_k) \in \mathscr{U}_\mathcal{O}^{\{j,k\}}$, we have  $f(P_j)=f(P_j,P_k)=f(P_k)$ and thus $f(P_j)=f(P_k)$, implying that all one-voter societies are assigned the same alternative. This violates \emph{ontoness}. So \eqref{eq neq sin i} holds.

Next, let $P_i' \in  \mathscr{U}_\mathcal{O}$ be such that $t(P_i')=t(P_i)$ and $t(P_i') \ R_i' \ f(P_{N\setminus \{i\}}) \ P_i' \  T$ for each $T \in 2^{\mathcal{O}}\setminus \{t(P_i), f(P_{N\setminus \{i\}})\}$. By \emph{participation}, $f(P_i',P_{N\setminus \{i\}}) \ R_i' \ f(P_{N\setminus \{i\}})$. By \emph{tops-only}, $f(P_i', P_{N \setminus \{i\}})=f(P_N)$. Thus, \eqref{eq neq sin i} implies $f(P_i', P_{N \setminus \{i\}})\neq f(P_{N\setminus\{i\}})$ and, therefore, $f(P_i', P_{N \setminus \{i\}})=t(P_i')$. Hence, using \emph{tops-only} again, we get 
\begin{equation}\label{f de P es top de i}
    f(P_N)=t(P_i).
\end{equation}
By Lemma~\ref{repeatedvotes}, we can safely assume that all the tops in $P_N$ are different. Let $j \in N\setminus \{i\}$ and consider $P_j' \in \mathscr{U}_\mathcal{O}$ such that $t(P_j')=t(P_j)$ and $b(P_j')=t(P_i)$.  
By \emph{tops-only} and \eqref{f de P es top de i}, $f(P_j', P_{N \setminus \{j\}})=t(P_i)$.  
By \emph{participation}, $f(P_j',P_{N \setminus \{j\}})\ R_j' \ f(P_{N\setminus \{j\}})$ and, since $b(P_j')=t(P_i)$, we have $f(P_{N \setminus \{j\}})=t(P_i)$. 
Removing in the same way each remaining voter, one at a time, we obtain $f(P_i)=t(P_i)$. 
Thus, by \emph{object neutrality}, $f(P_i^\mu)=\mu(f(P_i))$ for any permutation $\mu: \mathcal{O} \longrightarrow \mathcal{O}$. 
 Together with \emph{anonymity}, granted by Proposition~\ref{partfnpimplyanon}, this implies that 
\begin{equation}\label{ele prima}
    f(P_\ell')=t(P_\ell') \text{ for each } \{\ell\}\in \mathcal{N} \text{ and  } P_\ell' \in \mathscr{U}_\mathcal{O} \text{ with } |t(P_\ell')|=|t(P_i)|. 
\end{equation}

Now, let $\{j,k\} \in \mathcal{N}$ and  $(P_j,P_k) \in \mathscr{U}_\mathcal{O}^{\{j,k\}}$ be such that $t(P_j)\neq t(P_k)$ and $|t(P_j)|=|t(P_k)|=|t(P_i)|.$ There are three cases:
\begin{itemize}
    \item[$\boldsymbol{1.}$] $\boldsymbol{f(P_j, P_k) \notin \{t(P_j), t(P_k)\}}$. Let $P_j' \in \mathscr{U}_\mathcal{O}$ be such that $t(P_j')=t(P_j)$ and $b(P_j')=f(P_j,P_k)$. By \emph{tops-only}, $f(P_j',P_k)=f(P_j,P_k)$. By \emph{participation}, $f(P_j',P_k) \ R_j' \ f(P_k)$. Thus, $f(P_k)=f(P_j,P_k)\neq t(P_k)$, contradicting \eqref{ele prima}.
    \item[$\boldsymbol{2.}$] $\boldsymbol{f(P_j, P_k)=t(P_j)}$.  Consider a permutation $\mu: \mathcal{O} \longrightarrow \mathcal{O}$ such that $\mu(t(P_j))=t(P_k)$ and $\mu(t(P_k))=t(P_j)$.
By \emph{object neutrality} we obtain $f(P^\mu_{j}, P^\mu_{k})=\mu(f(P_j, P_k))=\mu(t(P_j))=t(P_k)$, and thus $f(P^\mu_{j}, P^\mu_{k})=t(P_k)$. 
By Proposition~\ref{partfnpimplyanon}, $f$ is \emph{anonymous}. 
Therefore, by \emph{anonymity} and \emph{tops-only}, $$f(P_j, P_k)=f(P_k,P_j)=f(P_j^\mu, P_k^\mu)=t(P_k),$$ and so $f(P_j,P_k)=t(P_k)$, contradicting this case's hypothesis.
    \item[$\boldsymbol{3.}$] $\boldsymbol{f(P_j, P_k)=t(P_k)}$.  
    A similar reasoning to the previous case allows us to deduce that $f(P_j,P_k)=t(P_j)$, contradicting this case's hypothesis.
\end{itemize}
Since in each case we reach a contradiction, we conclude that no such rule $f$ exists. 
\end{proof}

\bigskip

It is important to notice that 
there are no redundant axioms in Theorem \ref{mainresult}. To see this, 
we consider several rules. 
Each one satisfies all the axioms but one. 

\begin{itemize}
\item All but \emph{ontoness}: The rule $f^\mathcal{O}$ in Example~\ref{ejemplo}.
\item All but \emph{tops-onliness}: The rule $\widetilde{f}$ in Example \ref{ejemplo}.
    \item All but \emph{false-name-proofness}: For each $N \in \mathcal{N}$ and each $P_N \in \mathscr{U}_{\mathcal{O}}^N$, $f^{\min}(P_N)=t(P_{\min\{i \ : \  i \in N\}}).$
    This rule selects the top of the voter with the minimum index in the society.
    This rule satisfies \emph{participation} but is not \emph{anonymous}, thus, by Proposition~\ref{partfnpimplyanon}, $f^{\min}$ is not \emph{false-name-proof}. 
    \item All but \emph{participation}: For each $N \in \mathcal{N}$ and each $P_N \in \mathscr{U}_{\mathcal{O}}^N$, $x \in f^\star(P_N)$ if and only if \mbox{$|\{i\in N\mid x\in t(P_i)\}|=1$}.
    This rule selects those objects that belong to only one top-choice set of the preference profile. 
To see that $f^\star$ does not satisfy \emph{participation}, let $\mathcal{O}=\{x,y,z\}$,  $N=\{i,j\}$ and $(P_i, P_j) \in \mathscr{U}_{\mathcal{O}}^{\{i,j\}}$ be such that $t(P_i)=\{x,y\}$, $\{x,y,z\} \ P_i \ \{z\}$, and $t(P_j)=\{x,y,z\}$. 
Then, if voter $i$ does not participate, she can manipulate $f^\star$  since $f^\star(P_j)=\{x,y,z\} \ P_i \ \{z\}=f^\star(P_i,P_j)$.
\item All but \emph{object-neutrality}: Let $\succ$ be a linear order over $2^\mathcal{O}$.
For each $N \in \mathcal{N}$ and each $P_N \in \mathscr{U}_{\mathcal{O}}^N$, $f^{\succ}(P_N)=\max_{\succ}\{t(P_i) : i \in N\}$. 
This rule selects, for each profile, the best positioned top according to $\succ$. 
To see that $f^\succ$ does not satisfy \emph{object-neutrality}, let $\mathcal{O}=\{x,y,z\}$, $\{x\}\succ \{y\}\succ \{z\}$, $N=\{i,j\}$, and $P_N \in \mathscr{U}_\mathcal{O}^N$ be such that $t(P_i)=\{x\}$ and $t(P_j)=\{z\}$. 
Thus $f(P_N)=\{x\}$.
If we consider a permutation $\mu: \mathcal{O} \longrightarrow \mathcal{O}$ such that $\mu(x)=z,$  $\mu(y)=x,$ and $\mu(z)=y$, then $\mu(f^\succ(P_N))=\mu(\{x\})=\{z\} \neq \{y\}=f^\succ(P_N^\mu)$. 

\end{itemize}
\subsection{Domain of separable preferences: maximality}\label{subsection separable}
        We have seen that when we consider the domain of all preferences over subsets of objects, no voting rule satisfies \emph{ontoness, tops-onliness, false-name-proofness, participation}, and \emph{object neutrality.} 
        Nevertheless, rules exist that satisfy all of them when  preferences are separable.  
        We can find two examples in \citet{fioravanti2024false}, with voting by quota 1 and voting by unanimous quota. 
        A natural question is whether there is a domain larger than the domain of separable preferences in which 
        voting rules still satisfy all the axioms.
        Next, we answer the latter in the negative.

First, we remember the definition of separability.  For a voter, an object is \emph{good} by itself if it is better 
than no object at all; otherwise, the object is \emph{bad}. 
A preference  is separable if the partition of the object set into good and bad objects guides the ordering of subsets, in the sense that adding a good
object to any set leads to a better set while adding a bad object to any set leads to a worse set. Formally, 
preference $P_0$  
is \emph{%
separable} if for each $S\in 2^{\mathcal{O}}$ and each $x \in \mathcal{O} \setminus S$, 
\begin{equation*}
S\cup \{x \} \ P_0 \ S\text{ if and only if }\{x \} \ P_{0} \ \emptyset .
\end{equation*}
Let $\mathscr{S}$ be the domain of separable preferences. An important characterization of separability is presented in the following remark.
\begin{remark}{\normalfont{\citep{barbera1991voting}}}\label{remark separable} 
\em 
     Preference $P_0 \in \mathscr{U}_\mathcal{O}$ is separable if, for each $S\in 2^{\mathcal{O}}$ and each $x \in \mathcal{O} \setminus S$,  
     $$S\cup\{x\}\,P_0\ S \text{ if and only if } x\in t(P_0).$$
\end{remark}

Before introducing our definition of maximality, we need to define how to extend a rule from the domain of separable preferences to a broader domain.
\begin{definition}
Given a \emph{tops-only} rule $f:\mathscr{S}^\mathcal{N}\longrightarrow 2^\mathcal{O}$ and a domain $\mathscr{S}^\star$ such that $\mathscr{S} \subseteq \mathscr{S}^\star$, the \textbf{\emph{tops-only} extension of $\boldsymbol{f$ to $\mathscr{S}^\star}$} is such that, for each $N \in \mathcal{N}$ and each $P_N \in \mathscr{S}^\star$, $f(P_N)=f(\overline{P}_N)$ for some $\overline{P}_N \in \mathscr{S}$ with $t(\overline{P}_i)=t(P_i)$ for each $i \in N$.   
\end{definition}
In words, the output of the tops-extension rule on any given profile ``mimics'' the outcome that the rule has in a profile with the same tops that belongs to the domain of separable preferences.
This makes this extension unique.

        
Let $\mathcal{F}$ denote the class of all rules defined on the domain of separable preferences that are \emph{onto, tops-only, false-name-proof}, satisfy \emph{participation}, and are \emph{object neutral}.
The following definition, inspired by \citet{BONIFACIO2023102845} and \citet{arribillaga2025}, formalizes the idea of a maximal domain for a set of rules satisfying a list of properties.\footnote{Previous studies on maximal domains mostly focus on the property of \emph{strategy-proofness} \citep[see, for example,][]{serizawa1995power,ching1998maximal,masso2001maximal}.}




\begin{definition}\label{def maximal}
Let $\mathscr{S}^\star$ be such that  $\mathscr{S} \subseteq \mathscr{S}^\star\subseteq \mathscr{U}_\mathcal{O}$ and let $\mathcal{F}^\star \subseteq \mathcal{F}$. Domain $\mathscr{S}^\star$ is \textbf{maximal for} $\boldsymbol{\mathcal{F}^\star}$ if

    \begin{itemize}
        \item[(i)] for each $f\in\mathcal{F}^\star$ the \emph{tops-only} extension of $f$ to $\mathscr{S}^\star$ satisfies  \emph{ontoness, false-name-proofness},
        \emph{participation} and \emph{object neutrality}, and 

        \item[(ii)] for each $P_0 \in \mathscr{U}_{\mathcal{O}} \setminus  \mathscr{S}^\star$ there is  $f\in\mathcal{F}^\star$ such that the \emph{tops-only} extension of  $f$ to $\mathscr{S}^\star \cup \{P_0\}$ violates (at least) one of the properties listed in (i).
    \end{itemize}
\end{definition}
Note that we define the notion of maximality with respect to the domain of separable preferences, as this is our object of study, but it can be defined with respect to any domain.


An important fact about the maximality of a domain with respect to a list of properties thus defined is its monotonicity: the bigger the set of rules considered for maximality, the smaller the domain of preferences in which the properties hold. We highlight this observation in the following remark.

\begin{remark} \label{remark maximal}  \em Assume that $\mathscr{S}^\star$ is maximal for $\mathcal{F}^\star$. If $\mathcal{F}^i \subseteq \mathcal{F}^\star$ and $\mathscr{S}^i$ is maximal for $\mathcal{F}^i$ with $i \in \{1,2\}$, then by Definition \ref{def maximal} it follows that $\mathscr{S}^\star \subseteq \mathscr{S}^1 \cap \mathscr{S}^2$.
\end{remark}

Our maximality result establishes that the domain of separable preferences is maximal for the family of rules that are \emph{onto, tops-only, false-name-proof}, satisfy \emph{participation}, and are \emph{object neutral}. 
The proof proceeds by showing that, for any non-separable preference, one can construct a tops-only extension of a rule that satisfies all the properties on the separable domain but fails one of them, in this case \emph{participation}, for that non-separable preference.



\begin{theorem}\label{theo maximal}	
The domain of separable preferences is maximal for the set of all \emph{onto, tops-only}, and \emph{false-name-proof} rules that satisfy \emph{participation} and \emph{object neutrality}, i.e., $\mathscr{S}$ is maximal for $\mathcal{F}$.
\end{theorem}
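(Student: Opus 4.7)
The plan is to verify the two items of Definition~\ref{def maximal} with $\mathscr{S}^\star=\mathscr{S}$ and $\mathcal{F}^\star=\mathcal{F}$. Item~(i) is immediate, since the \emph{tops-only} extension of any $f\in\mathcal{F}$ to $\mathscr{S}$ is $f$ itself and $\mathcal{F}$ is defined as the class of rules on $\mathscr{S}$ satisfying the listed properties. The work therefore concerns item~(ii): given a non-separable $P_0\in\mathscr{U}_\mathcal{O}\setminus\mathscr{S}$, I need to exhibit some $f\in\mathcal{F}$ whose \emph{tops-only} extension to $\mathscr{S}\cup\{P_0\}$ violates at least one property. I will aim at \emph{participation}: \emph{tops-only} is built into the extension, \emph{ontoness} is witnessed by the separable sub-profiles, \emph{false-name-proofness} is automatic for rules whose output depends only on the set of distinct tops (which every rule I use does), and \emph{object-neutrality} transfers directly from the separable domain.

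By Remark~\ref{remark separable} applied to $P_0$, there exist $S\in 2^\mathcal{O}$ and $x\notin S$ for which the equivalence ``$S\cup\{x\}\,P_0\,S$ iff $x\in t(P_0)$'' fails, giving either case~(a) with $x\notin t(P_0)$ and $S\cup\{x\}\,P_0\,S$, or case~(b) with $x\in t(P_0)$ and $S\,P_0\,S\cup\{x\}$. In case~(b) my first try is to take voting by quota~$1$, $f_1(P_N)=\bigcup_{i\in N}t(P_i)$, which lies in $\mathcal{F}$, and to build a profile in which voter $i$ carries $P_i=P_0$ while the remaining separable voters' tops union to $S$. Under the favorable subcondition $t(P_0)\subseteq S\cup\{x\}$, this immediately gives $f_1(P_N)=S\cup\{x\}$ and $f_1(P_{N\setminus\{i\}})=S$, and the case~(b) hypothesis $S\,P_0\,S\cup\{x\}$ is the desired participation failure for $i$. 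Case~(a) is dual with $f=f_n$, $f_n(P_N)=\bigcap_{i\in N}t(P_i)$: take a separable voter $j$ with top $S\cup\{x\}$ and, under the favorable subcondition $S\subseteq t(P_0)$, the outcome with $i$ is $t(P_0)\cap(S\cup\{x\})=S$ while the outcome without $i$ is $S\cup\{x\}$, and $S\cup\{x\}\,P_0\,S$ delivers the failure.

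The main obstacle I foresee is the residual configuration in which every pair $(S,x)$ supplied by the Remark violates the convenient containment ($S\subseteq t(P_0)$ in case~(a), $t(P_0)\subseteq S\cup\{x\}$ in case~(b)). Naively padding the profile with separable singleton-top voters for the elements of $t(P_0)\setminus(S\cup\{x\})$ does not obviously save the argument, because the resulting participation inequality adjoins extra objects to both sides of the failing comparison, and whether $P_0$'s ranking survives that adjunction is itself controlled by its non-separability. In this residual case I would abandon the two canonical rules and construct a bespoke $f\in\mathcal{F}$ by prescribing, on one carefully chosen multi-voter profile containing at least three distinct tops whose stabilizer in the permutation group of $\mathcal{O}$ is trivial, an output lying in the \emph{participation}-admissible range $[\bigcap_j t(P_j),\bigcup_j t(P_j)]$ that $P_0$ strictly prefers to the rule's output on the same profile after removing voter $i$; the use of three asymmetric tops is what unlocks outputs forbidden by \emph{object-neutrality} in the symmetric two-voter situations. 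The delicate step will be to propagate this local prescription to a full rule on $\mathscr{S}$ that still satisfies \emph{participation} for every separable voter in every profile, which is where the flexibility left by dropping strategy-proofness (present in \citet{fioravanti2024false}'s stronger characterization) becomes essential.
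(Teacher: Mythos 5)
Your skeleton is the right one (item~(i) is indeed trivial, and item~(ii) is attacked exactly as in the paper: for each non-separable $P_0$ exhibit a rule in $\mathcal{F}$ whose \emph{tops-only} extension to $\mathscr{S}\cup\{P_0\}$ fails \emph{participation}, starting from a witness $(S,x)$ given by Remark~\ref{remark separable}), and your union/intersection arguments are correct under the containment subconditions you impose. But the ``residual configuration'' you flag is a genuine, non-vacuous gap, and the bespoke-rule construction you sketch for it is not carried out. Concretely: take $\mathcal{O}=\{x,y\}$ and $P_0$ with $\{x,y\}\ P_0\ \emptyset\ P_0\ \{x\}\ P_0\ \{y\}$. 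This is non-separable, every witnessing pair falls in your case~(b) with $t(P_0)=\{x,y\}\not\subseteq S\cup\{x\}$, and \emph{neither} of your canonical rules detects it: the union rule always returns $\{x,y\}=t(P_0)$ when this voter is present, and the intersection rule's output is unchanged by her presence, so both extensions satisfy all five properties on $\mathscr{S}\cup\{P_0\}$. (A dual example with $t(P_0)=\emptyset$ defeats both rules in case~(a).) Your fallback --- prescribing an output on one asymmetric profile and ``propagating'' it to a full rule on $\mathscr{S}$ in $\mathcal{F}$ --- is precisely the hard part and is left entirely open; as stated, the proof is incomplete.

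The paper avoids the bespoke construction by choosing two more robust members of $\mathcal{F}$: $f^>$ and $f^\geq$, which include an object iff it lies in strictly more than (resp.\ at least) half of the \emph{distinct} tops of the profile. For $f^>$ the participation argument goes through for every non-separable $P_0$ with $t(P_0)\neq\emptyset$, and for $f^\geq$ for every non-separable $P_0$ with $t(P_0)\neq\mathcal{O}$; since no preference has both an empty and a full top, every non-separable $P_0$ is caught by one of the two, and Remark~\ref{remark maximal} finishes the proof. If you want to salvage your approach, replacing your union/intersection pair by such threshold rules (or otherwise producing explicit rules in $\mathcal{F}$ covering the residual examples above) is the missing step.
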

\begin{proof}
First, we introduce the following notation: given a society $N\in\mathcal{N}$ and profile $P_N\in \mathscr{U}_{\mathcal{O}}^N$,  let $t(P_N)=\{t(P_i) : i\in N\}$ be the collection of (different) tops of profile $P_N$.
Now, let $\overline{P}_0 \in \mathscr{U}_\mathcal{O} \setminus \mathscr{S}$. By Remark \ref{remark separable}, there are $S \subseteq \mathcal{O}$ and $x \in \mathcal{O}\setminus S$ such that either
\begin{equation}\label{sep 1}
    x \in t(\overline{P}_0) \text{ \ and \ }S \ \overline{P}_0 \ S \cup \{x\}
\end{equation}
or 
\begin{equation}\label{sep 2}
    x \notin t(\overline{P}_0) \text{ \ and \ }S \cup \{x\} \ \overline{P}_0 \ S. 
\end{equation}

 Consider the rule $f^>:\mathscr{S}^\mathcal{N} \longrightarrow 2^\mathcal{O}$ such that, for each $N \in \mathcal{N}$ and  each profile $P_N\in \mathscr{S}^N$ satisfies $$x\in f^>(P_N)\text{ if and only if }\left|\left\{t(P_i)\in t(P_N) \ :  \ x\in t(P_i)\right\}\right|> \frac{|t(P_N)|}{2}.$$
Clearly, $f^>$ is \emph{tops-only} and \emph{object neutral}. 
Since the rule only depends on the set of top subsets of options and not on how many times each subset appears, $f^>$ is \emph{false-name-proof}.
Moreover, as casting a vote can only add support to a good object for a voter, $f^>$ also satisfies \emph{participation}. 
Therefore,  $f^> \in\mathcal{F}$. 

Assume further that $t(\overline{P}_0)\neq \emptyset$ and consider the \emph{tops-only} extension of $f^>$ to $\mathscr{S} \cup \{\overline{P}_0\}$. 
There are two cases to consider. 
If \eqref{sep 1} holds, let $(P_1, P_2) \in (\mathscr{S} \cup \{\overline{P}_0\})^{\{1,2\}}$ be such that $t(P_1)=S$ and $t(P_2)=S \cup \{x\}$. 
Let $i^\star \notin \{1,2\}$ and endow voter $i^\star$ with preference $\overline{P}_0$. 
Notice that, by \eqref{sep 1}, $t(\overline{P}_{i^\star})\neq S\cup\{x\}$.
Then, $$f^>(P_1, P_2)=S \ \overline{P}_{i^\star} \ S \cup \{x\}=f^>(P_1, P_2, \overline{P}_{i^\star}),$$ contradicting \emph{participation}. 
If \eqref{sep 2} holds and $t(\overline{P}_0)\nsubseteq S$, let $y \in t(\overline{P}_0) \setminus S$  and 
let $(P_1, P_2,P_3) \in (\mathscr{S} \cup \{\overline{P}_0\})^{\{1,2,3\}}$ be such that $t(P_1)=S$, $t(P_2)=S \cup \{x\}$, and $t(P_3)=S \cup \{x,y\}$. Let $i^\star \notin \{1,2,3\}$ and endow voter $i^\star$ with preference $\overline{P}_0$.
Then, $$f^>(P_1, P_2,P_3)=S \cup \{x\} \ \overline{P}_{i^\star} \ S =f^>(P_1, P_2,P_3, \overline{P}_{i^\star}),$$ contradicting \emph{participation}. 
If \eqref{sep 2} holds and $t(\overline{P}_0)\subseteq S$, let $y \in t(\overline{P}_0)$  and 
consider $(P_1, P_2,P_3) \in (\mathscr{S} \cup \{\overline{P}_0\})^{\{1,2,3\}}$ such that $t(P_1)=S$, $t(P_2)=S \cup \{x\}$, and $t(P_3)=(S \setminus \{y\} )\cup \{x\}$. 
Let $i^\star \notin \{1,2,3\}$ and endow voter $i^\star$ with preference $\overline{P}_0$.
Notice that, by \eqref{sep 2}, $t(\overline{P}_{i^\star})\neq S$.
Then, $$f^>(P_1, P_2,P_3)=S \cup \{x\} \ \overline{P}_{i^\star} \ S =f^>(P_1, P_2,P_3, \overline{P}_{i^\star}),$$ contradicting \emph{participation}.  
Since in both cases we reach a contradiction, it follows that $t(\overline{P}_0)=\emptyset$.  
This implies that 
    \begin{equation} \label{max s1}
        \text{if }\mathscr{S}^>\text{ is maximal for }\{f^>\},\text{ then }\mathscr{S}^> \subseteq \mathscr{S}\cup\{\overline{P}_0 \in \mathscr{U}_\mathcal{O} \setminus \mathscr{S} : t(\overline{P}_0)=\emptyset\}. 
    \end{equation}

Second, consider the rule 
 $f^\geq:\mathscr{S}^\mathcal{N} \longrightarrow 2^\mathcal{O}$ defined by setting, for each $N \in \mathcal{N}$ and each profile $P_N\in \mathscr{S}^N$, $$x\in f^\geq(P_N)\text{ if and only if }\left|\left\{t(P_i)\in t(P_N) \ :  \ x\in t(P_i)\right\}\right|\geq \frac{|t(P_N)|}{2}.$$
A reasoning similar to the one used for $f^>$ shows that $f^\geq \in\mathcal{F}$.

Assume now that $t(\overline{P}_0)\neq \mathcal{O}$ and consider the \emph{tops-only} extension of $f^\geq$ to $\mathscr{S} \cup \{\overline{P}_0\}$. 
There are two cases. 
If \eqref{sep 1} holds and $S \neq \emptyset$, let $(P_1, P_2, P_3) \in (\mathscr{S} \cup \{\overline{P}_0\})^{\{1,2,3\}}$ be such that $t(P_1)=S$, $t(P_2)=S \cup \{x\}$, and $t(P_3)=\emptyset$. Let $i^\star \notin \{1,2,3\}$ and endow voter $i^\star$ with preference $\overline{P}_0$.
Notice that, by \eqref{sep 1}, $t(\overline{P}_{i^\star})\neq S\cup\{x\}$.
Then, $$f^\geq(P_1, P_2,P_3)=S \ \overline{P}_{i^\star} \ S \cup \{x\}=f^\geq(P_1, P_2,P_3, \overline{P}_{i^\star}),$$ contradicting \emph{participation}. 
If \eqref{sep 1} holds and $S = \emptyset$, let $y \in \mathcal{O}\setminus t(\overline{P}_0)$ and let $(P_1, P_2, P_3) \in (\mathscr{S} \cup \{\overline{P}_0\})^{\{1,2,3\}}$ be such that $t(P_1)=\emptyset$, $t(P_2)=\{x\}$, and $t(P_3)=\{y\}$. Let $i^\star \notin \{1,2,3\}$ and endow voter $i^\star$ with preference $\overline{P}_0$.
Notice that, by \eqref{sep 1}, $t(\overline{P}_{i^\star})\neq \{x\}$.\footnote{Also note that, if $\mathcal{O}=\{x,y\}$, then \eqref{sep 1} is trivially contradicted.}
Then, $$f^\geq(P_1, P_2,P_3)=\emptyset \ \overline{P}_{i^\star} \ \{x\}=f^\geq(P_1, P_2,P_3, \overline{P}_{i^\star}),$$ contradicting \emph{participation}. 
If \eqref{sep 2} holds, let  $(P_1, P_2) \in (\mathscr{S} \cup \{\overline{P}_0\})^{\{1,2\}}$ be such that $t(P_1)=S$ and $t(P_2)=S \cup \{x\}$.
Let $i^\star \notin \{1,2\}$ and endow voter  $i^\star$ with preference $\overline{P}_0$. 
Notice that by \eqref{sep 2}, $t(\overline{P}_{i^\star})\neq S$.
Then, $$f^\geq(P_1, P_2)=S \cup \{x\} \ \overline{P}_{i^\star} \ S =f^\geq(P_1, P_2, \overline{P}_{i^\star}),$$ contradicting \emph{participation}.  Since in each case we reach a contradiction, it follows that $t(\overline{P}_0)=\mathcal{O}$.  
This implies that 
    \begin{equation} \label{max s2}
        \text{if }\mathscr{S}^\geq\text{ is maximal for }\left\{ f^\geq \right\},\text{ then }\mathscr{S}^\geq \subseteq \mathscr{S}\cup\{\overline{P}_0 \in \mathscr{U}_\mathcal{O} \setminus \mathscr{S} : t(\overline{P}_0)=\mathcal{O}\}. 
    \end{equation}

Finally, let $\mathscr{S}^\star$ be maximal for $\mathcal{F}$. 
By Definition \ref{def maximal}, $\mathscr{S} \subseteq \mathscr{S}^\star$. 
Since $f^>, f^\geq \in \mathcal{F}$, Remark \ref{remark maximal} implies that $\mathscr{S}^\star \subseteq \mathscr{S}^> \cap \mathscr{S}^\geq$. 
By \eqref{max s1} and \eqref{max s2}, $\mathscr{S}^> \cap \mathscr{S}^\geq \subseteq \mathscr{S}$ and thus $\mathscr{S}^\star \subseteq \mathscr{S}$. 
Hence, $\mathscr{S}^\star=\mathscr{S}.$
\end{proof}

It is important to emphasize that Theorem \ref{mainresult} does not follow from Theorem \ref{theo maximal}. 
The latter should be understood as showing that, within the class of rules satisfying all the properties on the separable domain, the introduction of even a single non-separable preference guarantees the existence of some rule in the class that violates participation. 
This observation, by itself, does not rule out the possibility that a particular rule might satisfy all axioms on the full, unrestricted domain. 
Rather, it underscores that once the family of admissible rules on the separable domain is fully characterized, the separable domain is maximal in a stronger, structural sense.

\begin{remark}\em
Given that the property violated in the proof of Theorem~\ref{theo maximal} is \emph{participation}, our definition of maximality implies that the domain of separable preferences is also maximal for the family of all \emph{tops-only} rules that satisfy this axiom.    
\end{remark}

\section{Conclusion}\label{conclusion}

We study voting rules that satisfy false-name-proofness and participation in environments where voters' identities are difficult to verify. 
Our analysis establishes three main results.

First, we show that false-name-proofness and participation jointly imply anonymity across any domain of preferences (Proposition~\ref{partfnpimplyanon}). 
This reveals a fundamental constraint: rules that prevent both duplicate voting and beneficial abstention cannot differentiate between voters. 
Consequently, no neutral voting rule satisfies both properties in the universal domain (Proposition~\ref{nonanonandneutral}), extending previous impossibility results to the weaker notion of false-name-proofness.

Second, when alternatives consist of subsets of objects, we prove that ontoness, tops-onliness, false-name-proofness, participation, and object neutrality cannot be simultaneously satisfied (Theorem~\ref{mainresult}). 
This impossibility is non-trivial: the five axioms are logically independent, and some voting by quota rules satisfy all of them on the domain of separable preferences.

Third, we establish that the domain of separable preferences is maximal for these five properties (Theorem~\ref{theo maximal}). 
For any non-separable preference, there exists at least one rule in the class whose tops-only extension violates at least one axiom. 
This demonstrates that separability cannot be relaxed without sacrificing compatibility for some rules satisfying all desiderata on the separable domain.

Our findings have practical implications for designing voting mechanisms in online environments where identity verification is costly or infeasible. 
They suggest that platforms seeking to implement manipulation-resistant voting with the additional desiderata of ontoness, tops-onliness, and object neutrality must either restrict the domain of admissible preferences or accept that some natural rules satisfying all properties on separable preferences will fail when preferences exhibit complementarities or substitution effects. 
The separable domain thus provides a well-defined benchmark for such mechanisms, where all desiderata can be jointly satisfied.

\section*{Statements and Declarations}
The authors have no competing interests to declare that are relevant to the content of this article.

\bibliographystyle{ecta}
\bibliography{ref2}

@inproceedings{blockchaingovernanceKiayias2023,
author = {Kiayias, Aggelos and Lazos, Philip},
title = {SoK: Blockchain Governance},
year = {2023},
isbn = {9781450398619},
publisher = {Association for Computing Machinery},
address = {New York, NY, USA},
url = {https://doi.org/10.1145/3558535.3559794},
doi = {10.1145/3558535.3559794},
abstract = {Blockchain systems come with a promise of decentralization that, more often than not, stumbles on a roadblock when key decisions about modifying the software codebase need to be made. In a setting where "code-is-law," modifying the code can be a controversial process, frustrating to system stakeholders, and, most crucially, highly disruptive for the underlying systems. This is attested by the fact that both of the two major cryptocurrencies, Bitcoin and Ethereum, have undergone "hard forks" that resulted in the creation of alternative systems which divided engineering teams, computational resources, and duplicated digital assets creating confusion for the wider community and opportunities for fraudulent activities. The above events, and numerous other similar ones, underscore the importance of Blockchain governance, namely the set of processes that blockchain platforms utilize in order to perform decision-making and converge to a widely accepted direction for the system to evolve. While a rich topic of study in other areas, including social choice theory and electronic voting for public office elections, governance of blockchain platforms is lacking a well established set of methods and practices that are adopted industry wide. Instead, different systems adopt approaches of a variable level of sophistication and degree of integration within the platform and its functionality. This makes the topic of blockchain governance a fertile domain for a thorough systematization that we undertake in this work.Our methodology starts by distilling a comprehensive array of properties for sound governance systems drawn from academic sources as well as grey literature of election systems and blockchain white papers. These are divided into seven categories, suffrage, Pareto efficiency, confidentiality, verifiability, accountability, sustainability and liveness that capture the whole spectrum of desiderata of governance systems. We interpret these properties in the context of blockchain platforms and proceed to classify ten block-chain systems whose governance processes are sufficiently well documented in system white papers, or it can be inferred by publicly available information and software. While all the identified properties are satisfied, even partially, by at least one system, we observe that there exists no system that satisfies most properties. Our work lays out a common foundation for assessing governance processes in blockchain systems and while it highlights shortcomings and deficiencies in currently deployed systems, it can also be a catalyst for improving these processes to the highest possible standard with appropriate trade-offs, something direly needed for blockchain platforms to operate effectively in the long term.},
booktitle = {Proceedings of the 4th ACM Conference on Advances in Financial Technologies},
pages = {61–73},
numpages = {13},
location = {Cambridge, MA, USA},
series = {AFT '22}
}

@inproceedings{scaroundtheblockGrossi2022,
author = {Grossi, Davide},
title = {Social Choice Around the Block: On the Computational Social Choice of Blockchain},
year = {2022},
isbn = {9781450392136},
publisher = {International Foundation for Autonomous Agents and Multiagent Systems},
address = {Richland, SC},
abstract = {One of the most innovative aspects of blockchain technology consists in the introduction of an incentive layer to regulate the behavior of distributed protocols. The designer of a blockchain system faces therefore issues that are akin to those relevant for the design of economic mechanisms, and faces them in a computational setting. From this perspective the present paper argues for the importance of computational social choice in blockchain research. It identifies a few challenges at the interface of the two fields that illustrate the strong potential for cross-fertilization between them.},
booktitle = {Proceedings of the 21st International Conference on Autonomous Agents and Multiagent Systems},
pages = {1788–1793},
numpages = {6},
keywords = {blockchain, computational social choice, multi-agent systems},
location = {Virtual Event, New Zealand},
series = {AAMAS '22}
}

@article{arribillaga2025,
  title = {Not obviously manipulable allotment rules},
  author = {R. Pablo Arribillaga and Agustín G. Bonifacio},
  journal = {Economic Theory},
  year = {2025},
  volume = {80},
  number = {1},
  pages = {355--380},
  isbn = {1432-0479},
  doi = {10.1007/s00199-024-01633-1},
  url = {https://doi.org/10.1007/s00199-024-01633-1}
}

@Article{Nehama2022,
author={Nehama, Ilan
and Todo, Taiki
and Yokoo, Makoto},
title={Manipulation-resistant false-name-proof facility location mechanisms for complex graphs},
journal={Autonomous Agents and Multi-Agent Systems},
year={2022},
month={Jan},
day={22},
volume={36},
number={1},
pages={12},
abstract={In many real-life scenarios, a group of agents needs to agree on a common action, e.g., on a location for a public facility, while there is some consistency between their preferences, e.g., all preferences are derived from a common metric space. The facility location problem models such scenarios and it is a well-studied problem in social choice. We study mechanisms for facility location on unweighted undirected graphs that are resistant to manipulations (strategy-proof, abstention-proof, and false-name-proof) by both individuals and coalitions on one hand and anonymous and efficient (Pareto-optimal) on the other. We define a new family of graphs, {\$}{\$}ZV{\$}{\$}-line graphs, and show a general facility location mechanism for these graphs that satisfies all these desired properties. This mechanism can also be computed in polynomial time and it can equivalently be defined as the first Pareto-optimal location according to some predefined order. Our main result, the {\$}{\$}ZV{\$}{\$}-line graphs family and the mechanism we present for it, unifies all works in the literature of false-name-proof facility location on discrete graphs including the preliminary (unpublished) works we are aware of. In particular, we show mechanisms for all graphs of at most five vertices, discrete trees, bicliques, and clique tree graphs. Finally, we discuss some generalizations and limitations of our result for facility location problems on other structures: Weighted graphs, large discrete cycles, infinite graphs; and for facility location problems concerning infinite societies.},
issn={1573-7454},
doi={10.1007/s10458-021-09535-5},
url={https://doi.org/10.1007/s10458-021-09535-5}
}

@article{black1948rationale,
  title={On the rationale of group decision-making},
  author={Black, Duncan},
  journal={Journal of Political Economy},
  volume={56},
  number={1},
  pages={23--34},
  year={1948},
  publisher={The University of Chicago Press}
}

@article{gibbard1973manipulation,
title = {Manipulation of voting schemes: {A} general result},
author = {Gibbard, Allan},
year = {1973},
journal = {Econometrica},
volume = {41},
number = {4},
pages = {587--601},
url = {https://EconPapers.repec.org/RePEc:ecm:emetrp:v:41:y:1973:i:4:p:587-601}
}

@article{satterthwaite1975strategy,
title = {Strategy-proofness and {A}rrow's conditions: Existence and correspondence theorems for voting procedures and social welfare functions},
journal = {Journal of Economic Theory},
volume = {10},
number = {2},
pages = {187-217},
year = {1975},
issn = {0022-0531},
doi = {https://doi.org/10.1016/0022-0531(75)90050-2},
url = {https://www.sciencedirect.com/science/article/pii/0022053175900502},
author = {Mark Allen Satterthwaite},
abstract = {Consider a committee which must select one alternative from a set of three or more alternatives. Committee members each cast a ballot which the voting procedure counts. The voting procedure is strategy-proof if it always induces every committee member to cast a ballot revealing his preference. I prove three theorems. First, every strategy-proof voting procedure is dictatorial. Second, this paper's strategy-proofness condition for voting procedures corresponds to Arrow's rationality, independence of irrelevant alternatives, non-negative response, and citizens' sovereignty conditions for social welfare functions. Third, Arrow's general possibility theorem is proven in a new manner.}
}

@article{fishburn1983paradoxes,
 ISSN = {0025570X, 19300980},
 URL = {http://www.jstor.org/stable/2689808},
 author = {Peter C. Fishburn and Steven J. Brams},
 journal = {Mathematics Magazine},
 number = {4},
 pages = {207--214},
 publisher = {Mathematical Association of America},
 title = {Paradoxes of Preferential Voting},
 urldate = {2024-07-18},
 volume = {56},
 year = {1983}
}

@article{MOULIN198853,
title = {Condorcet's principle implies the no show paradox},
journal = {Journal of Economic Theory},
volume = {45},
number = {1},
pages = {53-64},
year = {1988},
issn = {0022-0531},
doi = {https://doi.org/10.1016/0022-0531(88)90253-0},
url = {https://www.sciencedirect.com/science/article/pii/0022053188902530},
author = {Hervé Moulin},
abstract = {In elections with variable (and potentially large) electorates, Brams and Fishburn's No Show Paradox arises when a voter is better off not voting than casting a sincere ballot.. Scoring methods do not generate the paradox. We show that every Condorcet consistent method (viz., electing the Condorcet winner when there is one) must generate the paradox among four or more candidates.}
}

@article{barbera1991voting,
 ISSN = {00129682, 14680262},
 URL = {http://www.jstor.org/stable/2938220},
 abstract = {Problems of social choice frequently take the following form. There are n voters and a set K = {1, 2, ldots, k} of objects. The voters must choose a subset of K. We define a class of voting schemes called voting by committees. The main result of the paper is a characterization of voting by committees, which is the class of all voting schemes that satisfy voter sovereignty and nonmanipulability on the domain of separable preferences. This result is analogous to the literature on the Groves and Clarke scheme in that it characterizes all of the nonmanipulable voting schemes on an important domain.},
 author = {Salvador Barberà and Hugo Sonnenschein and Lin Zhou},
 journal = {Econometrica},
 number = {3},
 pages = {595--609},
 publisher = {[Wiley, Econometric Society]},
 title = {Voting by committees},
 urldate = {2024-02-21},
 volume = {59},
 year = {1991}
}

@article{yokoo2004effect,
title = {The effect of false-name bids in combinatorial auctions: {N}ew fraud in internet auctions},
journal = {Games and Economic Behavior},
volume = {46},
number = {1},
pages = {174-188},
year = {2004},
issn = {0899-8256},
doi = {https://doi.org/10.1016/S0899-8256(03)00045-9},
url = {https://www.sciencedirect.com/science/article/pii/S0899825603000459},
author = {Makoto Yokoo and Yuko Sakurai and Shigeo Matsubara},
keywords = {Auction, Strategy-proof, Mechanism design},
abstract = {We examine the effect of false-name bids on combinatorial auction protocols. False-name bids are bids submitted by a single bidder using multiple identifiers such as multiple e-mail addresses. The obtained results are summarized as follows: (1)the Vickrey–Clarke–Groves (VCG) mechanism, which is strategy-proof and Pareto efficient when there exists no false-name bid, is not false-name-proof; (2)there exists no false-name-proof combinatorial auction protocol that satisfies Pareto efficiency; (3)one sufficient condition where the VCG mechanism is false-name-proof is identified, i.e., the concavity of a surplus function over bidders.}
}

@InProceedings{conitzer2008anonymity,
author="Conitzer, Vincent",
editor="Papadimitriou, Christos
and Zhang, Shuzhong",
title="Anonymity-proof voting rules",
booktitle="Internet and Network Economics",
year="2008",
publisher="Springer Berlin Heidelberg",
address="Berlin, Heidelberg",
pages="295--306",
abstract="A (randomized, anonymous) voting rule maps any multiset of total orders (aka. votes) over a fixed set of alternatives to a probability distribution over these alternatives. A voting rule f is false-name-proof if no voter ever benefits from casting more than one vote. It is anonymity-proof if it satisfies voluntary participation and it is false-name-proof. We show that the class of anonymity-proof neutral voting rules consists exactly of the rules of the following form. With some probability kf{\thinspace}∈{\thinspace}[0,1], the rule chooses an alternative uniformly at random. With probability 1{\thinspace}−{\thinspace}kf, the rule first draws a pair of alternatives uniformly at random. If every vote prefers the same alternative between the two (and there is at least one vote), then the rule chooses that alternative. Otherwise, the rule flips a fair coin to decide between the two alternatives. We also show how the characterization changes if group strategy-proofness is added as a requirement.",
isbn="978-3-540-92185-1"
}

@inproceedings{wagman2008optimal,
  title={Optimal False-Name-Proof Voting Rules with Costly Voting.},
  author={Wagman, Liad and Conitzer, Vincent},
  booktitle={Proceedings of AAAI},
  pages={190--195},
  year={2008}
}

@inproceedings{todo2011false,
author = {Todo, Taiki and Iwasaki, Atsushi and Yokoo, Makoto},
title = {False-name-proof mechanism design without money},
year = {2011},
isbn = {0982657161},
publisher = {International Foundation for Autonomous Agents and Multiagent Systems},
address = {Richland, SC},
abstract = {Mechanism design studies how to design mechanisms that result in good outcomes even when agents strategically report their preferences. In traditional settings, it is assumed that a mechanism can enforce payments to give an incentive for agents to act honestly. However, in many Internet application domains, introducing monetary transfers is impossible or undesirable. Also, in such highly anonymous settings as the Internet, declaring preferences dishonestly is not the only way to manipulate the mechanism. Often, it is possible for an agent to pretend to be multiple agents and submit multiple reports under different identifiers, e.g., by creating different e-mail addresses. The effect of such false-name manipulations can be more serious in a mechanism without monetary transfers, since submitting multiple reports would have no risk.In this paper, we present a case study in false-name-proof mechanism design without money. In our basic setting, agents are located on a real line, and the mechanism must select the location of a public facility; the cost of an agent is its distance to the facility. This setting is called the facility location problem and can represent various situations where an agent's preference is single-peaked. First, we fully characterize the deterministic false-name-proof facility location mechanisms in this basic setting. By utilizing this characterization, we show the tight bounds of the approximation ratios for two objective functions: social cost and maximum cost. We then extend the results in two natural directions: a domain where a mechanism can be randomized and a domain where agents are located in a tree. Furthermore, we clarify the connections between false-name-proofness and other related properties.},
booktitle = {The 10th International Conference on Autonomous Agents and Multiagent Systems - Volume 2},
pages = {651–658},
numpages = {8},
keywords = {social choice theory, facility location problems, auction and mechanism design},
location = {Taipei, Taiwan},
series = {AAMAS '11}
}

@article{bu2013unfolding,
title = {Unfolding the mystery of false-name-proofness},
journal = {Economics Letters},
volume = {120},
number = {3},
pages = {559-561},
year = {2013},
issn = {0165-1765},
doi = {https://doi.org/10.1016/j.econlet.2013.06.011},
url = {https://www.sciencedirect.com/science/article/pii/S0165176513002930},
author = {Nanyang Bu},
keywords = {Public choice, False-name-proofness, Anonymity, Strategy-proofness, Population monotonicity},
abstract = {We study the general problem of public choice. We consider environments where agents’ identities may not be observable. A “rule” associates a preference profile with an alternative. An agent may create fictitious identities and submit multiple preference relations under them. We study false-name-proofness, the requirement that no agent should ever gain via such operations. Our main result is that if a rule is anonymous, strategy-proof, and population monotonic, then it is false-name-proof; if the preference domain contains only strict preference relations, the converse also holds.}
}

@incollection{todo2020false,
  title={False-name-proof facility location on discrete structures},
  author={Todo, Taiki and Okada, Nodoka and Yokoo, Makoto},
  booktitle={Proceedings of ECAI},
  volume={325},
  pages={227--234},
  year={2020},
  publisher={IOS Press}
}

@article{fioravanti2024false,
  title={False-name-proof and strategy-proof voting rules under separable preferences},
  author={Fioravanti, Federico and Mass{\'o}, Jordi},
  journal={Theory and Decision},
  pages={391--408},
  volume={97},
  year={2024},
  publisher={Springer}
}

@book{moulin1991axioms,
  title={Axioms of cooperative decision making},
  author={Moulin, Herv{\'e}},
  number={},
  year={1988},
  publisher={Cambridge University Press}
}

@article{BONIFACIO2023102845,
title = {Preference restrictions for simple and strategy-proof rules: Local and weakly single-peaked domains},
journal = {Journal of Mathematical Economics},
volume = {106},
pages = {102845},
year = {2023},
issn = {0304-4068},
doi = {https://doi.org/10.1016/j.jmateco.2023.102845},
url = {https://www.sciencedirect.com/science/article/pii/S0304406823000381},
author = {Agustín G. Bonifacio and Jordi Massó and Pablo Neme},
keywords = {Single-peakedness, Strategy-proofness, Anonymity, Unanimity, Tops-onlyness},
abstract = {We show that if a rule is strategy-proof, unanimous, anonymous and tops-only, then the preferences in its domain have to be local and weakly single-peaked, relative to a family of partial orders obtained from the rule by confronting at most three alternatives with distinct levels of support. Moreover, if this domain is enlarged by adding a non local and weakly single-peaked preference, then the rule becomes manipulable. We finally show that local and weak single-peakedness constitutes a weakening of known and well-studied restricted domains of preferences.}
}

@article{ching1998maximal,
  title={A maximal domain for the existence of strategy-proof rules},
  author={Ching, Stephen and Serizawa, Shigehiro},
  journal={Journal of Economic Theory},
  volume={78},
  number={1},
  pages={157--166},
  year={1998},
  publisher={Elsevier}
}

@article{masso2001maximal,
  title={Maximal domain of preferences in the division problem},
  author={Mass{\'o}, Jordi and Neme, Alejandro},
  journal={Games and Economic Behavior},
  volume={37},
  number={2},
  pages={367--387},
  year={2001},
  publisher={Elsevier}
}

@article{serizawa1995power,
  title={Power of voters and domain of preferences where voting by committees is strategy-proof},
  author={Serizawa, Shigehiro},
  journal={Journal of Economic Theory},
  volume={67},
  number={2},
  pages={599--608},
  year={1995},
  publisher={Elsevier}
}

@article{barbera2011strategyproof,
  title={Strategyproof social choice},
  author={Barber{\`a}, Salvador},
  journal={Handbook of Social Choice and Welfare},
  volume={2},
  pages={731--831},
  year={2011},
  publisher={Elsevier}
}

\end{document}